\numberwithin{equation}{section}
\declaretheoremstyle[bodyfont=\it,qed=\qedsymbol]{noproofstyle}
\declaretheorem[name=Observation,numbered=no]{observation*}
\declaretheorem[numberlike=equation]{theorem}
\declaretheorem[name=Theorem,numbered=no]{theorem*}
\declaretheorem[numberlike=equation]{lemma}
\declaretheorem[name=Lemma,numbered=no]{lemma*}
\declaretheorem[name=Corollary,numbered=no]{corollary*}
\declaretheorem[name=Proposition,numbered=no]{proposition*}
\declaretheorem[name=Claim,numbered=no]{claim*}
\declaretheorem[name=Conjecture,numbered=no]{conjecture*}
\declaretheorem[name=Question,numbered=no]{question*}
\declaretheoremstyle[bodyfont=\it,qed=$\lozenge$]{defstyle} 
\declaretheorem[numberlike=equation,style=defstyle]{definition}
\declaretheorem[unnumbered,name=Definition,style=defstyle]{definition*}
\declaretheorem[unnumbered,name=Example,style=defstyle]{example*}
\declaretheorem[unnumbered,name=Notation=defstyle]{notation*}
\declaretheorem[unnumbered,name=Construction,style=defstyle]{construction*}
\declaretheorem[unnumbered,name=Remark,style=defstyle]{remark*}
\newcommand{\shortECCC}[2]{\texttt{\href{http://eccc.hpi-web.de/report/\ifnumcomp{#1}{>}{93}{19}{20}#1/#2/}{eccc:TR#1-#2}}}
\newcommand{\parseECCC}[1]{
\StrSubstitute{#1}{TR}{}[\tmpstring]%
\IfSubStr{\tmpstring}{/}{ 
\StrBefore{\tmpstring}{/}[\ecccyear]%
\StrBehind{\tmpstring}{/}[\ecccreport]%
}{
\StrBefore{\tmpstring}{-}[\ecccyear]%
\StrBehind{\tmpstring}{-}[\ecccreport]%
}%
\shortECCC{\ecccyear}{\ecccreport}}
\newcommand{\mySPSP}[2]{\Sigma\Pi^{[#1]}\Sigma\Pi^{[#2]}}
\renewcommand{\epsilon}{\varepsilon}
\def\TensorRank{\operatorname{TensorRank}}
\begin{document}
\title{The Chasm at Depth Four, and Tensor Rank:\\ Old results, new insights}
\author{Suryajith Chillara
\thanks{Chennai Mathematical Institute, Research supported in part by a TCS PhD fellowship. Part of the work done while visiting Tel Aviv University. 
{\tt suryajith@cmi.ac.in }}\\
\and 
Mrinal Kumar
\thanks{Rutgers University, Research supported in part by a Simons Graduate Fellowship. Part of the work done while visiting Tel Aviv University. 
{\tt mrinal.kumar@rutgers.edu}}\\
\and
Ramprasad Saptharishi
\thanks{Tel Aviv University
{\tt ramprasad@cmi.ac.in}. The research leading to these results has received funding from
the European Community’s Seventh Framework Programme (FP7/2007-2013) under grant agreement number 257575.}\\
\and
V Vinay
\thanks{
Limberlink Technologies Pvt Ltd and Chennai Mathematical Institute
{\tt vinay@jed-i.in}}
}
\maketitle 

\begin{abstract}      
  \noindent Agrawal and Vinay \cite{av08} showed how any
  polynomial size arithmetic circuit can be thought of as a depth four arithmetic circuit of subexponential size. The resulting circuit size in this simulation was more carefully analyzed by Korian \cite{koiran} and subsequently by Tavenas \cite{tav13}. We provide a simple proof of this chain of results. We then abstract the main ingredient to apply it to formulas and constant depth circuits, and show more structured depth reductions for them.
  
  In an apriori surprising result, Raz~\cite{raz10} showed that for any $n$ and $d$, such that 
  $ \omega(1) \leq d \leq  O\left(\frac{\log n}{\log\log n}\right)$, constructing explicit tensors $T:[n]^d \rightarrow \F$ of high enough rank would imply superpolynomial lower bounds for arithmetic formulas over the field $\F$. Using the additional structure we obtain from our proof of the depth reduction for arithmetic formulas, we give a new and arguably simpler proof of this connection. We also extend this result for homogeneous formulas to show that, in fact, the connection holds for any $d$ such that $\omega(1) \leq d \leq n^{o(1)}$. 
  \end{abstract}

\section{Introduction}
Agrawal and Vinay \cite{av08} showed how any polynomial size\footnote{in fact, subexponential size } arithmetic
circuit can be thought of as a depth four  arithmetic
circuit of subexponential size. This provided a new direction to seek lower bounds in
arithmetic circuits. A long list of papers attest to increasingly
sophisticated lower bound arguments, centered around the idea of
shifted partial derivates due to Kayal, to separate the so called
arithmetic version of P vs NP (cf. \cite{github}). 

The depth reduction chasm was more carefully analyzed by Korian
\cite{koiran} and subsequently by Tavenas \cite{tav13}. Given the
importance of these depth reduction chasms, it is natural to seek
new and/or simpler proofs. In this work, we do just that. 

We use a simple combinatorial property to prove our result. We then
show how this can be extended to showing chasms for formulas and
constant depth circuits. In the case of formulas, we show the top
layer of multiplication gates have a much larger number of factors and
therefore has more structure than a typical depth reduced circuit. We
hope that such structural properties lead to better lower bounds for
formulas. In fact, we use this additional structure to give a new proof of a result of 
Raz~\cite{raz10} which shows that for an appropriate range of parameters,
constructing explicit tensors of high enough rank implies super-polynomial lower bounds for 
arithmetic formulas. 

More formally, let $f \in \F[\vecx_1, \vecx_2, \ldots, \vecx_d]$ be a set multilinear polynomial of degree $d$
in $nd$ variables, where for every $i \in [d]$, $\vecx_i$ is a subset of variables of size $n$. In a natural way, $f$ can be viewed as a 
tensor $f:[n]^d \rightarrow \F$. Raz~\cite{raz10} showed if $\omega(1) \leq d \leq O(\log n/\log\log n)$ and $f$ is computed by an arithmetic formula of size $\poly(n)$, then the rank of $f$ as a tensor is far from $n^{d-1}$ (the trivial upper bound\footnote{We know that there exist tensors $g:[n]^d \rightarrow \F$ of rank $n^{d-1}/d$.}). We use the additional structure obtained from our proof of depth reduction for formulas and constant depth arithmetic circuits, to give a very simple of proof of this result. As an extension, we also show that, in fact, the tensor rank of $f$ is far from $n^{d-1}$ as long as $f$ is computed by a \emph{homogeneous} formula of polynomial size and $d$ is such that $\omega(1) \leq d \leq n^{o(1)}$.

This write up is organised as follows. We give new proofs of depth reduction for arithmetic circuits (\autoref{sec:depth-reduction-ckts}), for homogeneous arithmetic formulas (\autoref{sec:homformulas}) and for constant depth arithmetic circuits (\autoref{sec:constant-depth-circuits}). We end by applying the new proof of depth reduction for homogeneous formulas to show a simple proof of Raz's upper bound~\cite{raz10} on the tensor rank of polynomials computed by small arithmetic formulas in \autoref{sec:tensor-rank}. 

For standard definitions concerning arithmetic circuits, arithmetic formulas etc, we refer the reader to the survey of Saptharishi~\cite{github}. 
For an introduction to connections between tensor rank and arithmetic circuits, we refer the reader to an excellent summary of such results in Raz's original paper~\cite{raz10}. Throughout this paper, unless otherwise stated, by \emph{depth reduction}, we mean a reduction to homogeneous depth four circuits. By a $\SPSP^{[b]}$ circuit, we denote a depth four circuit such that the fan-in of every product gate at the bottom level is \emph{at most} $d$, and by $\mySPSP{a}{b}$ circuit, we denote a $\SPSP^{[b]}$ circuit which also has the property that the fan-in of every product gate adjacent to the output gate has fan-in \emph{at least} $a$, i.e the polynomials computed at the gates adjacent to the output gate have have \emph{at least} $a$ non-trivial factors.

\section{Depth reduction for arithmetic circuits}\label{sec:depth-reduction-ckts}

We shall need the classical depth reduction of \cite{vsbr83, ajmv98}. 

\begin{theorem}[\cite{vsbr83,ajmv98}]\label{thm:vsbr}
  Let $f$ be an $n$-variate degree $d$ polynomial computed by an
  arithmetic circuit $\Phi$ of size $s$. Then there is an arithmetic
  circuit $\Phi'$ computing $f$ and has size $s' = \poly(s,n,d)$ and
  depth $O(\log d)$.
\end{theorem}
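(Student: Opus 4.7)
My plan is to follow the classical VSBR-style argument, which achieves the claimed parameters through a divide-and-conquer decomposition based on the multiplicative structure of $\Phi$. I would first syntactically rewrite $\Phi$ so that every multiplication gate has fan-in exactly two, and so that every multiplication gate $v$ with children $v_1, v_2$ satisfies $\deg(v_i) \leq \deg(v)/2$, where $\deg$ is the formal (syntactic) degree computed bottom-up. Such a rewrite is standard: one splits lopsided children and reassociates, with only a $\poly(s,n,d)$ blow-up in size. The depth of the intermediate circuit may grow, but that is irrelevant to the final construction.

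Second, for every pair of gates $(u,v)$ with $v$ a multiplication gate in the subcircuit rooted at $u$, I would define the bracket $[u:v]$ to be the formal coefficient of $\hat{v}$ when $\hat{u}$ is expanded as a polynomial in the symbol $\hat{v}$. Owing to the degree-halving property from the normalization, $\hat{v}$ effectively appears linearly in $\hat{u}$, so $[u:v]$ is well-defined and has formal degree at most $\deg(u) - \deg(v)$. One then verifies, by structural induction on the subcircuit of $u$, the twin identities
\[
\hat{u} \;=\; \sum_v [u:v]\cdot \hat{v}, \qquad [u:v] \;=\; \sum_w [u:w]\cdot [w:v],
\]
where the summation variable in each identity is restricted to multiplication gates whose degree lies in an intermediate band of the form $(d_0/3,\; 2d_0/3]$ (with $d_0$ the ambient degree of $u$, or of $\deg(u) - \deg(v)$ in the second identity). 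Each application of the second identity shrinks the ``degree gap'' between the outer and inner gate by a constant factor.

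The final circuit $\Phi'$ is then assembled by starting from $\hat{\Phi} = \sum_v [\Phi:v]\cdot \hat{v}$, unfolding each bracket using the second identity $O(\log d)$ times, and terminating once the surviving brackets have constant-size degree gap, at which point they are read off directly from the normalized $\Phi$. There are at most $s^2$ distinct bracket polynomials, each expressed as a sum of at most $s^2$ products of two previously computed brackets; thus the size of $\Phi'$ is $\poly(s,n,d)$ and the depth is $O(\log d)$. The main obstacle is the combinatorial bookkeeping around the degree bands: one must choose them so that every multiplication gate of a given degree is accounted for in exactly one product term, handle multiple occurrences of the same subgate $v$ inside $u$'s subcircuit, and commute the bracket operation through addition gates. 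All of this is routine but delicate, and is the technical heart of the VSBR proof; once it is in place, the stated size and depth bounds follow by direct counting.
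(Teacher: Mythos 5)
You should first note that the paper does not prove \autoref{thm:vsbr} at all: it is imported from \cite{vsbr83,ajmv98}, with the three extra structural properties ``inferred from their proof'' and a pointer to \cite{github} for a self-contained argument. So your proposal is being measured against the classical VSBR proof, and its skeleton --- normalize the circuit, introduce coefficient gates $[u:v]$, prove the two expansion identities over degree frontiers, and unfold $O(\log d)$ times --- is exactly the right one.

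That said, two concrete steps are wrong as written. First, your normalization is unsatisfiable: if formal degree is additive over products, a fan-in-two gate $v = v_1 \times v_2$ has $\deg(v) = \deg(v_1) + \deg(v_2)$, so demanding $\deg(v_i) \leq \deg(v)/2$ for \emph{both} children forces an exact even split, which is impossible already for $x\cdot(yz)$. The correct normalization asks only that the lighter child satisfy $\deg(v_2) \leq \deg(v_1)$, hence $\deg(v_2) \leq \deg(v)/2$ (this is also why the paper's version of the reduced circuit allows fan-in up to $5$ when it wants \emph{all} children to have degree at most half). Second, the well-definedness of $[u:v]$ as a linear coefficient does not follow from the normalization; it follows from the relation $2\deg(v) > \deg(u)$, since only then is it impossible for $\hat{v}$ to occur squared in a product contributing to $\hat{u}$. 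Your band $(d_0/3,\, 2d_0/3]$ admits gates of degree below $d_0/2$, for which the bracket need not be linear; the standard frontier instead consists of multiplication gates $w$ with $\deg(w) > m$ and all children of degree at most $m$, taken at $m = \deg(u)/2$, which simultaneously guarantees linearity and the degree halving $\deg([u:w]) \leq \deg(u) - \deg(w) < \deg(u)/2$. Since you explicitly defer ``the combinatorial bookkeeping around the degree bands'' --- which is precisely where these conditions must be gotten right, and is the actual content of the theorem --- the proposal as it stands is an accurate outline of the known proof rather than a proof, and the two points above need to be repaired before the outline can be filled in.
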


\noindent Moreover, the reduced circuit $\Phi'$ has the following properties:
\begin{enumerate}\itemsep1pt \parskip0pt \parsep0pt
\item The circuit is homogeneous. 
\item All multiplication gates have fan-in at most $5$. 
\item If $u$ is any multiplication gate of $\Phi'$, all its children $v$ satisfy $\deg(v) \leq \deg(u)/2$. 
\end{enumerate}

\noindent These properties can be inferred from their proof. A simple self-contained proof may be seen in \cite{github}.
Agrawal and Vinay \cite{av08} showed that arithmetic circuits can in
fact be reduced to depth four, and the result was subsequently
strengthened by Koiran \cite{koiran} and by Tavenas~\cite{tav13}. 

\begin{theorem}[\cite{av08,koiran,tav13}] \label{thm:av} 
  Let $f$ be an $n$-variate degree $d$ polynomial computed by a size
  $s$ arithmetic circuit. Then, for any $0< t \leq d$, $f$ can be
  computed by a homogeneous $\SPSP^{[t]}$ circuit of top
  fan-in $s^{O(d/t)}$ and size $s^{O(t + d/t)}$.
\end{theorem}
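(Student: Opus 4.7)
The plan is to combine the classical log-depth reduction of Theorem~\ref{thm:vsbr} with a single combinatorial ``frontier expansion'' of the resulting circuit, cut at degree $t$. First I would apply Theorem~\ref{thm:vsbr} to $\Phi$ to obtain an equivalent homogeneous circuit $\Phi'$ of size $s' = \poly(s,n,d)$ and depth $O(\log d)$, in which every multiplication gate has fan-in at most $5$ and each child has degree at most half that of its parent. Call a gate of $\Phi'$ \emph{small} if its degree is at most $t$, and \emph{large} otherwise. The target form for $f$ is
\[
f \;=\; \sum_{i=1}^{N} \prod_{j=1}^{k_i} g_{i,j},
\]
where each $g_{i,j}$ is a small gate of $\Phi'$ and $N \leq s^{O(d/t)}$. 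Each small gate, being a polynomial in $n$ variables of degree at most $t$, can then be expanded trivially as a $\Sigma\Pi^{[t]}$ of at most $\binom{n+t}{t} \leq n^{O(t)} \leq s^{O(t)}$ monomials; stacking these on top yields the claimed $\SPSP^{[t]}$ circuit.

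The main step is therefore to produce the above decomposition and control $N$. I would do this via a recursive procedure on $\Phi'$: for any gate $v$, if $v$ is small, its decomposition is $v$ itself; if $v = v_1 \cdots v_k$ is a large multiplication gate ($k \leq 5$), recursively decompose each $v_i$ and distribute the resulting product of sums; if $v$ is a large addition gate, recursively decompose each child and take the formal sum. Let $N(v)$ denote the number of terms produced. I would prove, by induction in reverse topological order on the gates of $\Phi'$, the bound $N(v) \leq s^{c\deg(v)/t}$ for a suitable constant $c$. For large multiplication gates, $N(v) \leq \prod_i N(v_i)$, which telescopes cleanly using homogeneity $\sum_i \deg(v_i) = \deg(v)$. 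For large addition gates, $N(v) \leq \sum_i N(v_i) \leq s \cdot \max_i N(v_i)$; here the degree does not strictly decrease, and one must exploit the $O(\log d)$ depth of $\Phi'$ to argue that only a bounded number of such factor-of-$s$ overheads can accumulate before another multiplication layer reduces the degree, so that the overall bound $s^{O(d/t)}$ is preserved.

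The final size bound then follows by combining the two steps: the circuit has $N \leq s^{O(d/t)}$ top-level products, each a product of at most $d$ small-gate polynomials (by homogeneity), each of which is expanded into a $\Sigma\Pi^{[t]}$ of size $s^{O(t)}$; the total size is at most $s^{O(d/t)} \cdot d \cdot s^{O(t)} \leq s^{O(t + d/t)}$, exactly as claimed. I expect the main technical obstacle to lie in the careful accounting for large addition gates in the inductive bound on $N(v)$: because their degree does not decrease upon passing to children, the extra factor of $s$ must be absorbed via the logarithmic depth of $\Phi'$, which requires setting up the inductive statement so that the bound $s^{O(d/t)}$ is genuinely preserved rather than degrading to something like $s^{O(d/t + \log d)}$.
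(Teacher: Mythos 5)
Your overall architecture --- apply \autoref{thm:vsbr}, expand the resulting log-depth homogeneous circuit down to a frontier of gates of degree at most $t$, then write each frontier gate as a sum of $n^{O(t)}\le s^{O(t)}$ monomials --- is the same as the paper's, and your final size accounting is fine. The gap is in the one step you yourself flag, and it is not a technicality: the inductive bound $N(v)\le s^{c\deg(v)/t}$ is genuinely false at addition gates, and the fix you gesture at does not recover $s^{O(d/t)}$. For a large addition gate $v$ of degree $k$, all children have degree $k$ by homogeneity, so $N(v)\le s\cdot\max_i N(v_i)\le s^{ck/t+1}$ with no degree decrease to pay for the extra exponent. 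Moreover this loss recurs at every addition gate of every branch of the parse tree you are implicitly enumerating, not just along one root-to-leaf path: a single parse tree can contain $\Theta((d/t)\log d)$ product gates of degree exceeding $t$ (up to $d/t$ minimal ones, with chains of length $O(\log d)$ above them, since a product gate may have only one large child), each preceded by an addition gate. So invoking the $O(\log d)$ depth yields at best $s^{O((d/t)\log d)}$ --- a \emph{multiplicative} $\log d$ in the exponent, not the additive one you anticipate. At $t=\sqrt d$ this is $s^{O(\sqrt d\log d)}$, i.e., only the original Agrawal--Vinay bound, not the Koiran--Tavenas strengthening the theorem asserts (and which is the point of the paper).

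The missing idea is an amortization. The paper collapses each addition gate together with the multiplication layer beneath it into the single identity \eqref{eqn:vsbr-expansion}, $g=\sum_{i=1}^{s}g_{i1}\cdots g_{i5}$ with $\deg(g_{ij})\le\deg(g)/2$, so that each factor-of-$s$ cost is bundled with a product whose factors have halved degree. It then charges each application of this identity to the creation of a new ``bad'' factor, meaning one of degree greater than $t/8$: when a factor of degree $k>t$ is expanded, the largest replacement has degree at least $k/5$ and, since the largest is at most $k/2$, the second largest has degree at least $k/8$, so one bad factor becomes at least two in every new summand. Homogeneity caps the number of bad factors per summand at $8d/t$, hence the number of expansions per branch at $8d/t$, hence the top fan-in at $s^{8d/t}$. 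Some such potential function that counts \emph{factors} rather than tracking degree alone is what your induction needs; the degree-only bound cannot be patched by the depth of the circuit.
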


To optimize the size of the final depth four circuit, we
should choose $t = \sqrt{d}$ to get a $\SPSP^{[t]}$ circuit of size
$s^{O(\sqrt{d})}$. Note that this implies that if we could prove a
lower bound of $n^{\omega(\sqrt{d})}$ for such $\SPSP^{[\sqrt{d}]}$
circuits, then we would have proved a lower bound for general
circuits. In this section, we shall see a simple proof of
\autoref{thm:av}.


\begin{proofof}{\autoref{thm:av}} Using \autoref{thm:vsbr}, we can
  assume that the circuit has $O(\log d)$ depth.
If $g$ is a polynomial computed at any intermediate node of $C$, then from the structure of $C$ we have a homogeneous expression

\begin{equation}\label{eqn:vsbr-expansion}
g \spaced{=} \sum_{i=1}^{s} g_{i1} \cdot g_{i2} \cdot g_{i3} \cdot g_{i4} \cdot g_{i5} 
\end{equation}
where each $g_{ij}$ is computed by a node in $C$ as well, and
$\deg(g_{ij}) \leq \deg(g)/2$. In particular, if $g$ were the output
gate of the circuit, the RHS may be interpreted as a $\SPSP^{[d/2]}$
circuit of top fan-in $s$ computing $f$. To obtain a $\SPSP^{[t]}$
circuit eventually, we shall perform the following steps on the output
gate:

\begin{quote}
  1. For each summand $g_{i1}\dots g_{ir}$ in the RHS, pick the gate
  $g_{ij}$ with largest degree (if there is a tie, pick the one with
  smaller index $j$). If $g_{ij}$ has degree greater than $t$, expand
  $g_{ij}$ in-place using \eqref{eqn:vsbr-expansion}.
  
  2. Repeat this process until all $g_{ij}$'s on the RHS have degree at most $t$. 
\end{quote}

\noindent 
Each iteration of the above procedure increases the top fan-in by a
multiplicative factor of $s$. If we could show that the in $O(d/t)$
iterations all terms on the RHS have degree at most $t$, then we would
have obtained an $\SPSP^{[t]}$ circuit of top fanin $s^{O(d/t)}$
computing $f$.

Label a term $g_{ij}$ \emph{bad} if its degree is more than $t/8$.  To bound
the number of iterations, we count the number of bad terms in each
summand. Since we would always maintain homogeneity, the number of
bad terms in any summand is at most $8d/t$ (i.e.,
not too many). We show each iteration \emph{increases} the number of
bad terms by at least one. This bounds the number of
iterations by $8d/t$.

In \eqref{eqn:vsbr-expansion}, if $\deg(g) = k$, the largest degree
term of any summand on the RHS is at least $k/5$ (since the sum of the
degrees of the five terms must add up to $k$) and so continues to be
bad if $k > t$.  But the largest degree term can have degree at most $k/2$. Hence
the other four terms must together contribute at least $k/2$ to the
degree. This implies that the second largest term in each summand has
degree at least $k/8$. This term is bad too, if we started with a
term of degree greater than $t$. Therefore, as long as we are
expanding terms of degree more than $t$ using
\eqref{eqn:vsbr-expansion}, we are guaranteed its replacements have at
least one additional bad term. As argued earlier, we can never have
more than $8d/t$ such terms in any summand and this bounds the number
of iterations by $8d/t$.

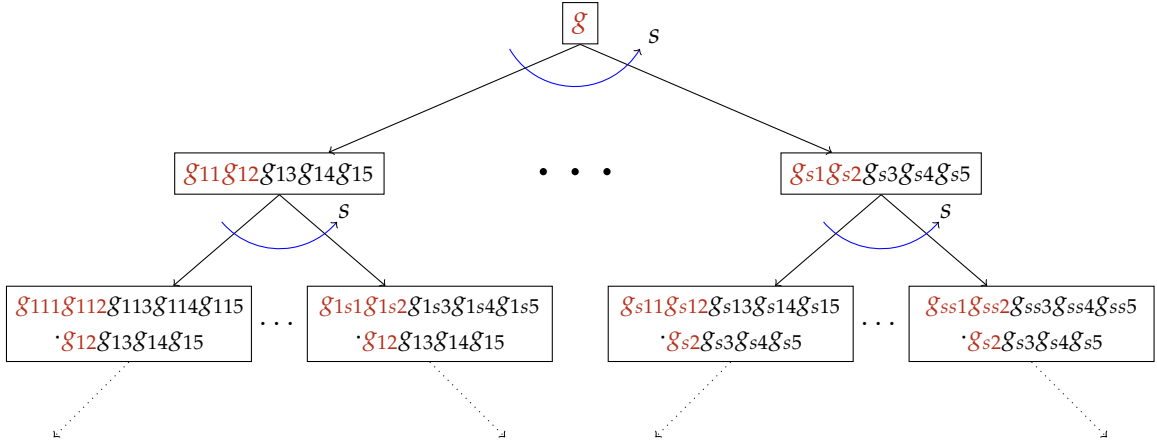
\begin{figure}
\begin{tikzpicture}
\node[rectangle, draw=black] (g) at (1,0) {$\textcolor{BrickRed}{g}$};
\node[rectangle,draw=black] (g1) at (-3,-2) {$\textcolor{BrickRed}{g_{11} g_{12}}g_{13}g_{14}g_{15}$}
edge[<-] (g.south);
\node[rectangle,draw=black] (gs) at (5,-2) {$\textcolor{BrickRed}{g_{s1}g_{s2}}g_{s3}g_{s4}g_{s5}$}
edge[<-] (g.south);
\node at (1,-2) {\Huge $\cdots$};

\draw[->,thin,draw=blue] (1,0) +(200:1cm) arc (210:330:1cm);
\path (1,0) ++(350:1cm) node  {$s$};

\node[rectangle,draw=black,text width=3cm,align=center] (g11) at (-5,-4) {\small $\textcolor{BrickRed}{g_{111}g_{112}}g_{113}g_{114}g_{115}$\\$\cdot \textcolor{BrickRed}{g_{12}}g_{13}g_{14}g_{15}$}
edge[<-] (g1.south);

\node at (-3,-4) {$\cdots$};

\node[rectangle,draw=black,text width=3cm,align=center] (g1s) at (-1,-4) {\small $\textcolor{BrickRed}{g_{1s1}g_{1s2}}g_{1s3}g_{1s4}g_{1s5}$\\$\cdot \textcolor{BrickRed}{g_{12}}g_{13}g_{14}g_{15}$}
edge[<-] (g1.south);

\draw[->,thin,draw=blue] (g1) +(220:1cm) arc (220:320:1cm);
\path (g1) ++(330:1cm) node  {$s$};

\node[rectangle,draw=black,text width=3cm,align=center] (gs1) at (3,-4) {\small $\textcolor{BrickRed}{g_{s11}g_{s12}}g_{s13}g_{s14}g_{s15}$\\$\cdot \textcolor{BrickRed}{g_{s2}}g_{s3}g_{s4}g_{s5}$}
edge[<-] (gs.south);

\node at (5,-4) {$\cdots$};

\node[rectangle,draw=black,text width=3cm,align=center] (gss) at (7,-4) {\small $\textcolor{BrickRed}{g_{ss1}g_{ss2}}g_{ss3}g_{ss4}g_{ss5}$\\$\cdot \textcolor{BrickRed}{g_{s2}}g_{s3}g_{s4}g_{s5}$}
edge[<-] (gs.south);

\draw[->,thin,draw=blue] (gs) +(220:1cm) arc (220:320:1cm);
\path (gs) ++(330:1cm) node  {$s$};

\draw[->, dotted] (g11.south) --+ (-1,-1);
\draw[->, dotted] (g1s.south) --+ (1,-1);
\draw[->, dotted] (gs1.south) --+ (-1,-1);
\draw[->, dotted] (gss.south) --+ (1,-1);
\end{tikzpicture}
\caption{Depth reduction analysis}
\label{fig:depth-red-analysis}
\end{figure}

Observe that the above procedure can be viewed as a tree, as described in \autoref{fig:depth-red-analysis}, where each node represents an intermediate summand in the iterative process.
From \eqref{eqn:vsbr-expansion} it is clear that the tree is $s$-ary.
Furthermore, the number of ``bad'' terms strictly increases as we go down in the tree (these are marked in red in \autoref{fig:depth-red-analysis}).
Since the total number of bad terms in any node can be at most $8(d/t)$, the depth of the tree is at most $8(d/t)$.
Therefore, the total number of leaves is at most $s^{\left(8d/t\right)}$.
Moreover, since every polynomial with degree at most $t$ can be written as a sum of at most $n^{O(t)}$ monomials, the total size of the resulting $\SPSP^{[t]}$ circuit is at most $s^{O(t + d/t)}$ (since $s\geq n$).
\end{proofof}


\section{Depth reduction for homogeneous formulas}\label{sec:homformulas}

For the class of homogeneous formulas and shallow circuits, we will
show that they can be depth reduced to a more structured depth four
circuit. 

To quickly recap the earlier proof, we began with an equation
$f=\sum_i g_{i1} \cdot g_{i2} \cdot g_{i3} \cdot g_{i4} \cdot g_{i5}$
and recursively applied the same expansion on all the large degree
$g_{ij}$'s. The only property we really used was that in the above
equation, there were at least two $g_{ij}$ that had large degree.

For the case of homogeneous formulas and shallow circuits, there are
better expansions that we could use as a starting point.

\begin{theorem}[\cite{hy11a}]\label{thm:HY} Let $f$ be an $n$-variate degree $d$ polynomial computed by a size $s$ homogeneous formula. Then, $f$ can be expressed as
\begin{equation}\label{eqn:HY}
f\spaced{=} \sum_{i=1}^s f_{i1} \cdot f_{i2} \cdots f_{ir}
\end{equation}
where
\begin{enumerate}
\item the expression is homogeneous,
\item for each $i,j$, we have $\inparen{\frac{1}{3}}^j d \leq \deg(f_{ij}) \leq \inparen{\frac{2}{3}}^j d$ and $r = \Theta(\log d)$,
\item each $f_{ij}$ is also computed by homogeneous formulas of size at most $s$. 
\end{enumerate}
\end{theorem}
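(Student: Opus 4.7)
The plan is to find a ``good'' gate $u$ inside the formula $F$ whose degree lies in the window $(d/3,\, 2d/3]$, algebraically peel it off as a leading factor of one branch of the decomposition, and recurse. To locate $u$, I would walk downward from the root of $F$: at each sum gate descend to either child (both have equal degree in a homogeneous formula), and at each product gate descend to the child of larger degree. The degree along this walk is monotonically non-increasing; it is unchanged at sum gates and can drop by a factor of at most $2$ at product gates, since we always pick the larger child. Hence the first gate whose degree has fallen to at most $2d/3$ must still have degree strictly greater than $d/3$, which is the desired $u$.

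Because $F$ is a tree, the gate $u$ appears in $F$ exactly once, so substituting a fresh variable $y$ for the subformula at $u$ yields a formula $F|_{u \to y}$ that is affine in $y$. This gives the algebraic identity $F = H + u \cdot G$, where $H := F|_{u \to 0}$ is the $y$-free part and $G$ is the coefficient of $y$. The polynomial $G$ is simply the product of the sibling cofactors at the product gates on the root-to-$u$ path, so it is computed by a homogeneous formula of size $\leq s$; and $H$ is obtained from $F$ by setting the $u$-subtree to $0$ and propagating zeros upward until they are absorbed by a sum gate, yielding a homogeneous formula of size strictly less than $s$. By homogeneity of $F$ and $u \cdot G$, we get $\deg(G) = d - \deg(u) \in [d/3,\, 2d/3)$ and $\deg(H) = d$ (or else $H \equiv 0$, terminating that branch).

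Next I would recurse on $G$ and $H$ separately and merge the results. Applying the statement inductively to $G$ gives $G = \sum_i g_{i1} \cdots g_{ir_G}$ with $\deg(g_{ij}) \in [(1/3)^j \deg(G),\, (2/3)^j \deg(G)]$; multiplying through by $u$ produces summands of the form $u \cdot g_{i1} \cdots g_{ir_G}$, and because $\deg(G) \in [d/3,\, 2d/3]$ the degree window for $g_{ij}$ rescales precisely to $[(1/3)^{j+1} d,\, (2/3)^{j+1} d]$, so $u$ sits as the first factor with $\deg(u) \in [d/3,\, 2d/3]$ while the $g_{ij}$'s slot in as the $(j+1)$-st factors with the required bounds. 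The inductive decomposition of $H$ (which has the same degree $d$ but smaller formula size) contributes the remaining summands, with degrees lining up for free. Each level of $G$-recursion shrinks the current degree by a factor of at least $2/3$, bounding the number of factors per summand by $r = \Theta(\log d)$; each level of $H$-recursion strictly decreases the formula size, so the whole process terminates.

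The main obstacle is to account for the number of summands and the sizes of intermediate formulas tightly enough to match the bound in the statement. Writing $N(F)$ for the number of products produced, one has $N(F) \leq N(G) + N(H)$, and one needs to show that the gates of $F$ can be charged to either the $G$-side (off-path siblings at product gates on the root-to-$u$ path, together with $O(\text{depth})$ combining product gates) or the $H$-side (the rest of $F$ after zero-propagation) with only minor overlap. This charging, which crucially uses that a formula is a tree rather than a general DAG, is what controls $N(F)$ by $|F| = s$ and simultaneously certifies that every factor $f_{ij}$ in the final expression is itself computed by a homogeneous formula of size at most $s$.
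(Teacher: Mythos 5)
This theorem is stated in the paper as a black-box citation of Hrub\v{e}\v{s}--Yehudayoff \cite{hy11a}, with no proof given; your argument is essentially the standard proof from that source (and from the survey \cite{github}): locate a gate $u$ with $\deg(u)\in(d/3,2d/3]$ by walking down the formula, use the tree structure to write $f = H + [u]\cdot G$ with $G$ the product of off-path siblings at product gates and $H = f|_{u\to 0}$, and recurse on $G$ (degree shrinks geometrically, giving $r=\Theta(\log d)$ factors) and on $H$ (size shrinks). The one step you leave as a sketch, bounding the top fan-in by $N(F)\le N(G)+N(H)\le s$, does go through cleanly if size is measured by the number of leaves: the leaves of $G$ (product-gate siblings) and of $H$ (sum-gate siblings surviving zero-propagation) are disjoint subsets of the leaves of $F$ avoiding the subtree of $u$, so $|G|+|H|\le |F|-1$ and induction closes. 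So the proposal is correct and matches the intended proof.
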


With this, we are ready to prove a more structured depth reduction for homogeneous formulas. 

\begin{theorem}\label{thm:depth-red-hom-formulas}
Let $f$ be a homogeneous $n$-variate degree $d$ polynomial computed by a size $s$ homogeneous formula. 
Then for any $0< t \leq d$, $f$ can be equivalently computed by a homogeneous $\Sigma\Pi^{[a]}\Sigma\Pi^{[t]}$ formula of top fan-in $s^{10(d/t)}$ where 
\[
a > \frac{1}{10}\frac{d}{t} \log t.
\]
\end{theorem}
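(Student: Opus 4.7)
The strategy mirrors the proof of \autoref{thm:av}, substituting the Hrub\v{e}s--Yehudayoff expansion \eqref{eqn:HY} for the VSBR-style expansion \eqref{eqn:vsbr-expansion} at each step. Starting from the expansion $f = \sum_{i=1}^s f_{i1}\cdots f_{ir}$ given by \autoref{thm:HY} (with $r = \Theta(\log d)$ geometrically decreasing factors), I iteratively replace in each summand any factor $f_{ij}$ of degree exceeding $t$ by its own HY expansion---this is available because \autoref{thm:HY} guarantees each $f_{ij}$ is itself computed by a homogeneous formula of size at most $s$. The procedure terminates when every factor in every summand has degree at most $t$, yielding a homogeneous $\Sigma\Pi\Sigma\Pi^{[t]}$ formula. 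As in \autoref{thm:av}, the process is visualised as an $s$-ary tree whose leaves are the final summands; the top fan-in equals the number of leaves, and the bottom product fan-in equals the number of remaining factors in a final summand.

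For the top fan-in bound of $s^{10d/t}$, adapt the ``bad term'' argument from \autoref{thm:av}. Call a factor \emph{bad} if its degree exceeds $t/10$. When a factor of degree $D > t$ is expanded via HY, the first two children have degrees at least $D/3$ and $D/9$ respectively (using the lower bounds in \eqref{eqn:HY}), both of which exceed $t/10$; so each expansion replaces one bad factor with at least two bad factors, a net gain of at least one. Since bad factors have degree greater than $t/10$ and total degree at most $d$, any summand contains at most $10d/t$ bad factors. This caps the depth of the expansion tree by $10d/t$, and the $s$-ary branching yields top fan-in at most $s^{10d/t}$.

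For the bottom product fan-in $a > (d\log t)/(10t)$, consider a final summand together with its \emph{factor tree}: roots are the original $f_{ij}$'s, internal nodes are expanded factors, and leaves are the final factors of degree at most $t$. In each HY expansion of a factor of degree $D$, the $j$-th child has degree at most $(2/3)^j D$; for $j \geq \log_{3/2}(D/t)$, this upper bound is already at most $t$, so such a child is automatically a leaf. Thus every internal node contributes at least $r' - \log_{3/2}(D/t) = \Theta(\log t)$ guaranteed ``tail leaves''. Their total degree is bounded by the geometric sum $\sum_{k\geq 0} t\cdot(2/3)^k \leq 3t$, independent of $D$. Since the leaf degrees in the whole factor tree sum to $d$, and each expansion contributes only $O(t)$ degree to its tail leaves, a degree-accounting argument forces the number of expansions per branch to be $\Omega(d/t)$; multiplying by $\Omega(\log t)$ tail leaves per expansion gives $a = \Omega((d/t)\log t)$, and tracking constants produces the stated bound.

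The main obstacle is setting constants coherently. The ``badness'' threshold controlling the upper bound on expansions (for the top fan-in) interacts with how we handle ``ambiguous'' children---those with $j$ in the range $[\log_3(D/t), \log_{3/2}(D/t))$, whose degrees may lie on either side of $t$---in the lower-bound accounting for the bottom fan-in. Both bounds $s^{10d/t}$ and $(d\log t)/(10t)$ have to emerge simultaneously from the same HY degree structure, so the two arguments cannot be tuned independently.
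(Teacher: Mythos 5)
Your proposal follows the same route as the paper's proof: start from the Hrub\v{e}s--Yehudayoff expansion \eqref{eqn:HY}, repeatedly expand in place the largest factor of degree exceeding $t$, bound the number of iterations from above by counting factors of degree larger than a constant fraction of $t$ (which gives the top fan-in $s^{O(d/t)}$ exactly as in \autoref{thm:av}), and bound the bottom product fan-in from below by degree accounting. The top fan-in half of your argument is complete and correct.

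The obstacle you flag at the end, however, is a genuine gap in the lower bound on $a$, and it does not go away by tuning constants. When a factor of degree $D>t$ is expanded via \eqref{eqn:HY}, the children with index $j$ in the range $[\log_3(D/t),\log_{3/2}(D/t))$ --- of which there are $\Theta(\log(D/t))$ --- are each permitted by the degree bounds of \autoref{thm:HY} to have degree as large as $t$ and hence to be leaves. So a single expansion can move up to $\Theta(t\log d)$ of degree into leaves, not $O(t)$; the ``at most $3t$ per expansion'' bound holds only for the guaranteed tail children with $j\ge\log_{3/2}(D/t)$. Redoing your accounting with a per-expansion loss of $O(t\log d)$ forces only $\Omega(d/(t\log d))$ expansions per branch, and multiplying by the $\Theta(\log t)$ tail leaves per expansion yields only $a=\Omega(d/t)$ when $t=d^{\Theta(1)}$ --- the $\log t$ gain, which is the whole point of the theorem, is lost. (For what it is worth, the paper's own potential argument asserts that the small factors in a summand have total degree at most $3t$ and silently skips this same point.) The clean fix is the paper's alternate proof: first obtain $\sum_i Q_{i1}\cdots Q_{ir}$ with $\deg(Q_{ij})\le t$ and each $Q_{ij}$ computed by a homogeneous formula of size at most $s$; regroup so that each summand is a product of $\Theta(d/t)$ factors of degree in $[t/2,t]$; then apply \eqref{eqn:HY} once to each such factor. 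Each application contributes at least $\log_3(t/2)$ factors of positive degree with no ambiguity, and multiplies the top fan-in by $s$, for a total extra cost of $s^{O(d/t)}$. This decouples the count of expansions from the count of factors and makes both bounds of the theorem fall out independently.
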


The resulting depth four circuit is more structured in the sense that the multiplication gates at the second layer have a much larger fan-in (by a factor of $
\log t$). In \autoref{thm:av}, we only know that the polynomials feeding into these multiplication gates have degree at most $t$. The theorem above states that if we were to begin with a homogeneous formula, the degree $t$ polynomials factorize further to give $\Theta((d/t)\log t)$ non-trivial polynomials instead of $\Theta(d/t)$ as obtained in \autoref{thm:av}. 

\begin{proof}
We start with equation \eqref{eqn:HY} which is easily seen to be a homogeneous $\SPSP^{[2d/3]}$ circuit with top fan-in $s$:
\[
f\spaced{=} \sum_{i=1}^s f_{i1} \cdot f_{i2} \cdots f_{ir}
\]

To obtain a $\mySPSP{\Theta((d/t)\log t)}{t}$ circuit eventually, we shall perform the following steps on the output gate:

\begin{quote}
 1. For each summand $f_{i1}\dots f_{ir}$ in the RHS, pick the gate $f_{ij}$ with largest degree (if there is a tie, pick the one with
  smaller index $j$). If $f_{ij}$  has degree more than $t$, expand that $f_{ij}$ in-place using \eqref{eqn:HY}. 
  
 2.  Repeat this process until all $f_{ij}$'s on the RHS have degree at most $t$. 
\end{quote}

\noindent
Each iteration again increases the top fan-in by a factor of
$s$. Again, as long as we are expanding terms using \eqref{eqn:HY} of
degree $k > t$, we are guaranteed by \autoref{thm:HY} that each
new summand has at least one more term of degree at least $k/9 >
t/9$. 

To upper bound the number of iterations, we use a potential
function --- the number of factors of degree strictly greater than
$t/9$ in a summand. A factor that is of degree $k>t$ and which is expanded using \eqref{eqn:HY} contributes at
least two factors of degree $> t/9$ per summand. Thus, the net increase in the
potential per iteration is at least $1$. Since this is a homogeneous
computation, there can be at most $9d/t$ such factors of degree
$>t/9$. Thus, the number of iterations must be bounded by $9d/t$ thereby
yielding a $\SPSP^{[t]}$ of top fan-in at most $s^{9(d/t)}$ and size
$s^{(t +9d/t)}$. This argument is similar to the argument in the proof of 
\autoref{thm:av}. 

We now argue that the fan-in of every product gate at the second level in the $\SPSP^{[t]}$
circuit obtained is $\Theta(d/t\log t)$.

To this end, we shall now show that we require $\Theta(d/t)$ iterations to make all
the factors have degree at most $t$. This, along with the fact that every iteration introduces a certain number of non-trivial factors in every product will complete the proof. 
 We will say a factor is \emph{small}
if degree is at most $t$ and \emph{big} otherwise. To prove a lower bound on
the number of iterations, we shall use a different potential function
--- the total degree of all the big factors.

Given the geometric progression of degrees in \autoref{thm:HY}, we
can easily see that the total degree of all the small factors in any summand is
bounded above by $3t$. Hence, the total degree of all the big terms is
$d - 3t$. But whenever \eqref{eqn:HY} is
applied on a big factor, we introduce several small degree factors with
total degree of at most $3t$. Hence, the potential drops by at most
$3t$ per iteration. This implies that we require $(d/3t)$
iterations to make it a constant. 

Since every expansion via \eqref{eqn:HY} introduces at least
$(\log_{3} t)$ non-trivial terms, it would then follow that every
summand at the end has $\frac{1}{(3\log 3)}\frac{d}{t}\log t >\frac{1}{10}\frac{d}{t} \log t$ non-trivial factors.
\end{proof}

\subsection{An alternate proof}
While we proved \autoref{thm:depth-red-hom-formulas} along the lines of \autoref{thm:av}, it
is possible to provide an alternate proof of it. We provide a sketch.
Starting with a  homogeneous formula, by \autoref{thm:av} we get a $\SPSP^{[t]}$ circuit of the form
  \[\sum_{i=1}^{s'} Q_{i1}\dots Q_{ir}\]
where $\deg(Q_{ij})\leq t$ and $s' = s^{O(d/t)}$.  From the innards of
this proof, it can be observed  that each of the $Q_{ij}$'s is
indeed computable by a homogeneous formula (formula, not a circuit) of
size at most $s$.  By multiplying several polynomials (if necessary)
of degree at most $t/2$, we may assume that there are $\Theta(d/t)$
polynomials $Q_{ij}$ in each summand, with their degree between $t/2$ and $t$.

Each of these polynomials may be expanded using \eqref{eqn:HY}. Since
each such expansion adds $O(\log t)$ additional factors and increases
the fan-in by a factor of $s$, the overall top fan-in is now $s' \cdot
s^{O(d/t)}$. The number of factors however increases from
$\Theta(d/t)$ to $\Theta((d/t)\log t)$. The resulting circuit is thus
a $\mySPSP{\Theta((d/t) \log t)}{t}$ circuit of top fan-in
$s^{O(d/t)}$.

\section{Depth reduction for constant depth circuits}~\label{sec:constant-depth-circuits}
In the same vein, a natural question is if we can obtain more
structure for a constant depth circuit. For example, is the resulting
depth four circuit more structured when we begin with a depth 100
circuit? By suitably adapting the expansion equation, our approach can
answer this question.

\begin{lemma}\label{lem:exp-eqn-for-shallow}
Let $f$ be an $n$-variate degree $d$ polynomial  computed by a size $s$ circuit of product-depth\footnote{the product depth is the number of multiplication gates encountered in any path from root to leaf} $\Delta$. Then $f$ can be expressed as
\begin{equation}\label{eqn:HY-for-shallow}
f\spaced{=} \sum_{i=1}^{s^2} f_{i2} \cdot f_{i3} \cdots f_{ir}\; \cdot \; g_{i1}  \cdots g_{i\ell}
\end{equation}
where
\begin{enumerate}
\item the expression is homogeneous,
\item for each $i,j$, we have $\inparen{\frac{1}{3}}^{j} d \leq \deg(f_{ij}) \leq \inparen{\frac{2}{3}}^{j} d$ and $r = \Theta(\log d)$,
\item each $f_{ij}$ and $g_{ij}$ is also computed by homogeneous formulas of size at most $s$ and product-depth $\Delta$. 
\item $\ell = \Omega(d^{1/\Delta})$
\item all $g_{ij}, f_{ij}$ are polynomials of degree at least $1$. 
\end{enumerate}
\end{lemma}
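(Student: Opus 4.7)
The plan is to combine the expansion of \autoref{thm:HY} with a secondary ``product extraction'' step that uses the bounded product-depth of the formula. First, I apply \autoref{thm:HY} to $f$ to obtain
\[
f \;=\; \sum_{i=1}^{s} f_{i1} \cdot f_{i2} \cdots f_{ir},
\]
with the degree and size conditions guaranteed by that theorem. Second, for each $i$ I further expand $f_{i1}$ as a sum of at most $s$ products having $\Omega(d^{1/\Delta})$ nontrivial factors, leveraging the fact that $f_{i1}$ is still computed by a homogeneous formula of product-depth at most $\Delta$. Substituting and relabelling then yields the claimed expression with $s \cdot s = s^2$ summands.

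For the first step I need \autoref{thm:HY} to preserve product-depth. Its proof is built around the identity $\Phi = A \cdot \Phi_v + B$, where $v$ is a chosen gate of $\Phi$, $\Phi_v$ is the subformula rooted at $v$, and $A, B$ are obtained by substituting a fresh variable $y$ for $\Phi_v$ and reading off the $y$-coefficient and the $y$-free part respectively. Because $\Phi$ is a tree, $A$ equals the product of the subformulas hanging off the $\times$-gates along the root-to-$v$ path (the cofactors of $\Phi_v$); each such cofactor sits strictly below a $\times$-gate, hence has product-depth at most $\Delta - 1$, so assembling $A$ as a single $\times$ of them gives a formula of product-depth at most $\Delta$. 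The residual $B$ is $\Phi$ with $\Phi_v$ replaced by $0$ and trivially has product-depth at most $\Delta$. Iterating throughout the construction of \autoref{thm:HY}, every $f_{ij}$ is computed by a homogeneous formula of size $\leq s$ and product-depth $\leq \Delta$.

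The core of the second step is the following subclaim, proved by induction on $\Delta$: \emph{any degree-$d'$ polynomial $p$ computed by a homogeneous formula $\Psi$ of size $\leq s$ and product-depth $\leq \Delta$ admits a representation $p = \sum_{j=1}^{s'} \prod_k g_{j,k}$ with $s' \leq s$, with every product having at least $(d')^{1/\Delta}$ factors, and with every $g_{j,k}$ a subformula of $\Psi$.} The base case $\Delta = 1$ is immediate: $\Psi$ is already in $\Sigma\Pi$ form with each product consisting of exactly $d'$ linear forms. For the inductive step, decompose $p = \sum_m c_m$ using the top $+$-gate (WLOG the top gate of $\Psi$ is $+$), where each $c_m$ is a $\times$-gate with children $h_{m,1}, \ldots, h_{m,k_m}$ of product-depth $\leq \Delta - 1$. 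If $k_m \geq (d')^{1/\Delta}$, the summand $c_m$ is already good. Otherwise some child $h_{m,j^*}$ satisfies $\deg h_{m,j^*} \geq d'/k_m \geq (d')^{1 - 1/\Delta}$, and the inductive hypothesis rewrites $h_{m,j^*}$ as a sum of products with at least $(\deg h_{m,j^*})^{1/(\Delta-1)} \geq (d')^{1/\Delta}$ factors each. Distributing this expansion into $c_m$ preserves the per-product factor count, and across all $m$ the total number of summands is at most $\sum_m \max(1, |\Psi_{h_{m,j^*}}|) \leq \sum_m |c_m| \leq s$.

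Combining the two steps and using $\deg f_{i1} \geq d/3$ yields the claimed decomposition after relabelling the $s \cdot s = s^2$ summands, with $\ell \geq (d/3)^{1/\Delta} = \Omega(d^{1/\Delta})$. The degree and size conditions on the $f_{ij}$'s transfer directly from \autoref{thm:HY}, while the product-depth, size, and nontriviality conditions on the $g_{ij}$'s follow because they are literal subformulas of the formula for $f_{i1}$. The main obstacle is the product-depth bookkeeping, both in verifying that \autoref{thm:HY} does not inflate the depth and in ensuring that the inductive distribution in the subclaim does not spawn new $\times$-gates; this is handled by keeping the $g$'s as literal subformulas rather than recomposing them.
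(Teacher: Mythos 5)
Your proof is correct, and it shares the paper's overall skeleton --- apply \autoref{thm:HY} once, then re-expand $f_{i1}$ using the product-depth bound --- but your mechanism for that second step is genuinely different. The paper locates a \emph{single} multiplication gate $h$ of fan-in $\Omega(d^{1/\Delta})$ inside the formula $\Phi'$ for $f_{i1}$ (such a gate exists because a root-to-leaf path through highest-degree children meets at most $\Delta$ product gates whose fan-ins must multiply up to $\deg f_{i1} \geq d/3$), writes $f_{i1} = A\cdot[h] + B$ with $A$ the cofactor of $h$, and iterates on $B$ (namely $\Phi'$ with $h$ set to $0$) to get $f_{i1} = A_1[h_1] + \dots + A_s[h_s]$; the wide factors $g_{i1},\dots,g_{i\ell}$ are then just the children of one gate $h_j$ together with $A_j$. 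You instead prove an induction on $\Delta$: either the top product gate already has fan-in at least $(d')^{1/\Delta}$, or you descend into its highest-degree child (degree at least $(d')^{1-1/\Delta}$, product-depth at most $\Delta-1$) and distribute the resulting expansion. The exponent arithmetic $\bigl((d')^{1-1/\Delta}\bigr)^{1/(\Delta-1)} = (d')^{1/\Delta}$ checks out, and your size accounting is sound, though the inductive hypothesis you actually invoke is ``number of summands at most the size of the subformula being expanded,'' not the global $s$; you should state it that way. Your route buys the stronger guarantee that every factor is a literal subformula of $\Phi$, whereas the paper's is shorter but leaves an unstructured cofactor $A_j$ among the $g$'s. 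Both arguments gloss over the same minor points (possible degree-$0$ cofactors, and the fact that the lemma's hypothesis says ``circuit'' while the argument really needs a homogeneous formula, as in \autoref{thm:depth-red-for-shallow}).
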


\noindent
Using this equation for the depth reduction yields the following theorem. 

\begin{theorem}\label{thm:depth-red-for-shallow}
Let $f$ be an $n$-variate degree $d$ polynomial computed by a size $s$ homogeneous formula of product-depth $\Delta$. Then for any parameter $t = o(d)$, we can compute $f$  equivalently by a homogeneous $\mySPSP{\Theta((d/t)\cdot t^{1/\Delta})}{t}$ circuit of top fan-in at most $s^{O(d/t)}$ and size $s^{O(t + d/t)}$. 
\end{theorem}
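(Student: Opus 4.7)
The plan is to run the same iterative-expansion procedure as in the proof of \autoref{thm:depth-red-hom-formulas}, but seeded with \autoref{lem:exp-eqn-for-shallow} rather than \autoref{thm:HY}. I would first apply \autoref{lem:exp-eqn-for-shallow} to $f$ to get a homogeneous expression of top fan-in $s^2$, and then repeatedly expand any factor of degree greater than $t$ in place using the same lemma applied to that sub-factor, stopping once every factor has degree at most $t$. Each such expansion multiplies the top fan-in by $s^2$ and, crucially, contributes $\Omega(k^{1/\Delta}) \geq \Omega(t^{1/\Delta})$ fresh non-trivial $g$-factors to its summand, where $k>t$ is the degree of the factor being expanded.

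Both potential arguments from the proof of \autoref{thm:depth-red-hom-formulas} then carry through with only minor modifications. For the upper bound of $O(d/t)$ iterations per summand, I would use the ``number of factors of degree exceeding $t/27$'' potential; the threshold must be lowered to $t/27$ (rather than the $t/9$ used for \autoref{thm:depth-red-hom-formulas}) because \autoref{lem:exp-eqn-for-shallow} indexes its $f$-factors from $j=2$, so the expansion of a factor of degree $k$ only guarantees $\deg(f_{i2}) \geq k/9$ and $\deg(f_{i3}) \geq k/27$, both of which exceed $t/27$ exactly when $k>t$. This potential strictly increases per iteration and is capped at $27d/t$ by homogeneity, so the top fan-in is at most $s^{O(d/t)}$. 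For the matching lower bound of $\Omega(d/t)$ iterations I would use the ``total degree of factors of degree greater than $t$'' potential and invoke the same geometric-sum estimate as in \autoref{thm:depth-red-hom-formulas}, which bounds the total degree of the small new $f$-factors per expansion by $3t$.

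The step I expect to be the most delicate is the treatment of the extra $g$-factors in the lower-bound argument: a priori, a single expansion could produce several small $g$-factors carrying substantial total degree, transferring more than $3t$ of degree from the big pool to the small pool in one step. I plan to handle this by coupling the potential drop with the accumulation of new factors---any $g$-factor that migrates degree out of the big pool is itself a fresh non-trivial factor counted toward the second-level fan-in, so the combined quantity ``big-degree potential minus a small multiple of the accumulated factor count'' still drops by only $O(t)$ per iteration. With this in hand, each summand undergoes $\Theta(d/t)$ iterations, each adding $\Omega(t^{1/\Delta})$ factors, giving a second-level product fan-in of $\Omega((d/t)\cdot t^{1/\Delta})$; the top fan-in is $s^{O(d/t)}$, and the size bound $s^{O(t+d/t)}$ follows as usual because each final factor has degree at most $t$ and therefore expands into at most $n^{O(t)}$ monomials.
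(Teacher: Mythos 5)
Your plan---rerunning the iterative expansion seeded with \autoref{lem:exp-eqn-for-shallow}---is the natural reading of the paper's one-line indication, and your upper-bound half is fine: the ``factors of degree $> t/27$'' potential strictly increases, is capped at $27d/t$ by homogeneity, and gives top fan-in $s^{O(d/t)}$ and size $s^{O(t+d/t)}$. The gap is exactly at the step you flag as delicate, and the coupling you propose does not close it. Unwind what the coupling actually yields for a fixed final summand, writing $I$ for the number of expansions in its history and $N$ for its final number of non-trivial factors. Each expansion of a degree-$k>t$ factor contributes $\Omega(k^{1/\Delta})=\Omega(t^{1/\Delta})$ new factors, so $N\geq I\cdot\Omega(t^{1/\Delta})$; and since an expansion that moves degree $D$ into small $g$-factors creates at least $D/t$ of them while all $d$ units of degree must eventually land in small factors, you also get $N\geq d/t-O(I)$. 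These two inequalities are everything the coupled potential provides, and they are simultaneously satisfiable with $I=\Theta\bigl(d/t^{1+1/\Delta}\bigr)$ and $N=\Theta(d/t)$. So the conclusion ``each summand undergoes $\Theta(d/t)$ iterations'' does not follow, and what you actually prove about the second-level fan-in is only $\Omega(d/t)$; the factor $t^{1/\Delta}$, which is the entire content of the theorem beyond \autoref{thm:av}, is lost. This is not mere slack: take $\Delta=2$, $t=\sqrt{d}$, and suppose that in every application of \eqref{eqn:HY-for-shallow} the product gate $h$ has all its children of degree at most $t$ (the lemma permits this---it only promises fan-in $\Omega(k^{1/\Delta})$ and degrees at least $1$). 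Then every $g$-factor is immediately small, each expansion retires the degree $\deg(f_{i1})\geq k/3$ at a price of roughly $\deg(f_{i1})/t$ factors, and only the $f_{ij}$ with $j\geq 2$ (total degree at most $2k/3$) remain big; summing the resulting geometric decay, the procedure halts with $N$ polynomially smaller than the claimed $\Omega\bigl((d/t)t^{1/2}\bigr)=\Omega(d^{3/4})$.

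The repair is to follow the template of the alternate proof in Section 3.1 rather than the direct iteration: first reduce to a $\SPSP^{[t]}$ circuit of top fan-in $s^{O(d/t)}$ in which each summand is a product of $\Theta(d/t)$ factors, \emph{each of degree $\Theta(t)$} (between $t/2$ and $t$, after the regrouping described there) and each computable by a homogeneous formula of size at most $s$ and product-depth $\Delta$. Only then apply \autoref{lem:exp-eqn-for-shallow}, once to each such factor. Because every factor being expanded now has degree $\Theta(t)$, each of the $\Theta(d/t)$ applications is guaranteed to return $\Omega(t^{1/\Delta})$ non-trivial factors, all of degree at most $2t/3\leq t$, at a cost of only $s^{2}$ in top fan-in, giving the stated $\mySPSP{\Theta((d/t)\cdot t^{1/\Delta})}{t}$ circuit. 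The moral is that the $\Omega(k^{1/\Delta})$ gain must be harvested from factors whose degree $k$ is already $\Theta(t)$: harvesting it from a factor of degree $k\gg t$, as your procedure may do, lets a single expansion swallow $\Theta(k)$ of the degree while paying out far fewer than $(k/t)\cdot t^{1/\Delta}$ factors.
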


The multiplication gates at the second layer of the resulting depth
four circuit have a much larger fan-in than what is claimed in
\autoref{thm:av} or \autoref{thm:depth-red-hom-formulas}.
When we begin with additional structure in the circuit, it seems we
get additional structure in the resulting depth four
circuit. Specifically, let us fix $t = \sqrt{d}$. The fan-in of the outer
product gate would be $\Theta(\sqrt{d})$ for a general circuit
(\autoref{thm:av}), $\Theta(\sqrt{d} \cdot \log d)$ for a
homogeneous formula (\autoref{thm:depth-red-hom-formulas}), and
$\Theta(\sqrt{d} \cdot d^{1/100})$ for a circuit of depth $100$
(\autoref{thm:depth-red-for-shallow}).

\begin{proofof}{\autoref{lem:exp-eqn-for-shallow}}
Let $\Phi$ be the product depth-$\Delta$ formula computing $f$. By \autoref{thm:HY}, we get
\begin{equation}\label{eqn:HY-in-shallow}
f\spaced{=} \sum_{i=1}^{s} f_{i1} \cdot f_{i2} \cdots f_{ir}
\end{equation}
with the required degree bounds. From the proof of \autoref{thm:HY}, it follows that each $f_{ij}$ is in fact a product of disjoint sub-formulas of $\Phi$, and hence in particular $f_{i1}$ is computable by size $s$ formulas of product-depth $\Delta$. We shall expand $f_{i1}$ again to obtain the $g_{ij}$s. 

Since $f_{i1}$ is a polynomial of degree at least $d/3$ computed by a
size $s$ formula $\Phi'$ of product-depth $\Delta$, there must be some
multiplication gate $h$ in $\Phi'$ of fan-in
$\Omega(d^{1/\Delta})$. Therefore,
\[
f_{i1}\spaced{=} A \cdot [h] \spaced{+} B.
\]
Here, $[h]$ is the polynomial computed at the gate $h$. 
Since $B$ is computed by $\Phi'$ with $h=0$, we can induct on $B$ to obtain
\[
f_{i1}\spaced{=} A_1 [h_1] + \dots + A_s [h_s]
\]
where each $h_i$ is a multiplication gate of fan-in $\Omega(d^{1/\Delta})$. Plugging this in \eqref{eqn:HY-in-shallow}, and replacing $[h_i]$'s by the factors, gives \eqref{eqn:HY-for-shallow}. 
\end{proofof}


\section{An Application: Tensor rank and formula lower bounds}~\label{sec:tensor-rank}
Tensors are a natural \emph{higher dimensional} analogue of matrices. For the purposes of this short note, we shall take the equivalent perspective of \emph{set-multilinear polynomials}. A detailed discussion on this can be seen in \cite{github}. 

\begin{definition}[Set-multilinear polynomials]\label{defn:set-multilinear}
  Let $\vecx = \vecx_1 \sqcup \cdots \sqcup \vecx_d$ be a partition of variables and let $|\vecx_i| = m_i$.
A polynomial $f(\vecx)$ is said to be \emph{set-multilinear} with respect to the above partition if every monomial $m$ in $f$ satisfies $\abs{m \intersection X_i} = 1$ for all $i \in [d]$.
\end{definition}
In other words, each monomial in $f$ picks up one variable from each part in the partition.
It is easy to see that many natural polynomials such as the determinant, the permanent are all set-multilinear for an appropriate partition of variables.

With this interpretation,  a rank-$1$ tensor is precisely a \emph{set-multilinear} product of linear forms such as
\[
f(\vecx) \spaced{=} \ell_1(\vecx_1) \cdots \ell_d(\vecx_d)
\]
where each $\ell_i(\vecx_i)$ is a linear form in the variables in $\vecx_i$. 

\begin{definition}[Tensor rank, as set-multilinear polynomials]
For polynomial $f(\vecx)$ that is set-multilinear with respect to $\vecx = \vecx_1 \sqcup \cdots \sqcup \vecx_d$, the \emph{tensor rank} of $f$ (denoted by $\TensorRank(f)$) is the smallest $r$ for which $f$ can be expressed as a \emph{set-multilinear} $\SPS$ circuit:
\[
f(\vecx) \spaced{=} \sum_{i=1}^r \ell_{i1}(\vecx_1) \cdots \ell_{id}(\vecx_d).
\]

\end{definition}

However, even computing the rank of an degree-$3$ tensor is known to be $\NP$-hard \cite{h90}.
But one could still ask if one can prove good upper or lower bounds for some specific tensors, or try to find an explicit tensor with large rank.

\subsection*{Properties of  tensor rank}

The following are a couple of basic properties that follow almost immediately from the definitions.

\begin{lemma}[Sub-additivity of tensor rank]\label{lem:tensor-subadditivity}
  Let $f$ and $g$ be two set-multilinear polynomials on $\vecx_1 \sqcup \cdots \sqcup \vecx_d$.
Then,  $\TensorRank(f+g) \leq \TensorRank(f) + \TensorRank(g)$.
\end{lemma}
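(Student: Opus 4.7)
The plan is essentially immediate from the definition. First I would let $r_f = \TensorRank(f)$ and $r_g = \TensorRank(g)$, and fix optimal set-multilinear decompositions witnessing these ranks:
\[
f(\vecx) = \sum_{i=1}^{r_f} \ell^{(f)}_{i1}(\vecx_1) \cdots \ell^{(f)}_{id}(\vecx_d), \qquad g(\vecx) = \sum_{i=1}^{r_g} \ell^{(g)}_{i1}(\vecx_1) \cdots \ell^{(g)}_{id}(\vecx_d),
\]
where each $\ell^{(f)}_{ij}$ and $\ell^{(g)}_{ij}$ is a linear form in $\vecx_j$.

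Next I would simply add the two expressions to get a single set-multilinear $\Sigma\Pi\Sigma$ representation of $f+g$ with exactly $r_f + r_g$ product terms. Since every summand is a set-multilinear product of linear forms (one from each block of the partition), this is a valid decomposition in the sense of the definition of tensor rank. Hence $\TensorRank(f+g) \leq r_f + r_g$, which is the claim.

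There is no real obstacle here: the only thing to verify is that placing the two decompositions side by side does not violate set-multilinearity, but each individual summand already came from a set-multilinear decomposition, so the combined expression trivially remains set-multilinear with respect to the partition $\vecx_1 \sqcup \cdots \sqcup \vecx_d$. The lemma is thus a one-line consequence of the definition; I would present the proof in essentially the three displayed lines above.
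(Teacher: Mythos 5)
Your proof is correct and is exactly the argument the paper has in mind: the lemma is stated as following ``almost immediately from the definitions,'' and concatenating the two optimal set-multilinear decompositions of $f$ and $g$ is precisely that immediate argument. Nothing further is needed.
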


\begin{lemma}[Sub-multiplicativity of tensor rank]\label{lem:tensor-submultiplicativity}
  Let $f(\vecy)$ be set-multilinear on $\vecy = \vecy_1 \sqcup \cdots \sqcup \vecy_a$ and $g(\vecz)$ be set-multilinear on $\vecz = \vecz_1 \sqcup \cdots \vecz_b$ with $\vecy \intersection \vecz = \emptyset$. Then polynomial $f \cdot g$ that is set-multilinear on $\vecy \union \vecz = \vecy_1 \sqcup \cdots \sqcup \vecy_a \sqcup \vecz_1 \sqcup \cdots \vecz_b$ satisfies\footnote{Tensor rank, in general, \textbf{does not} satisfies the relation $\TensorRank(f \cdot g) = \TensorRank(f) \cdot \TensorRank(g)$. For a concrete counter example, see \cite{CJZ17}.}
\[
\TensorRank(f \cdot g) \leq \TensorRank(f) \cdot \TensorRank(g).
\]
\end{lemma}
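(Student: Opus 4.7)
The plan is to take optimal tensor-rank decompositions of $f$ and $g$ separately, then multiply them out and show that the resulting cross terms are precisely of the form required by the definition of tensor rank on the combined partition.

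More concretely, I would first write $r_f := \TensorRank(f)$ and $r_g := \TensorRank(g)$ and fix optimal set-multilinear decompositions
\[
f(\vecy) \spaced{=} \sum_{i=1}^{r_f} \ell_{i1}(\vecy_1) \cdots \ell_{ia}(\vecy_a), \qquad
g(\vecz) \spaced{=} \sum_{j=1}^{r_g} m_{j1}(\vecz_1) \cdots m_{jb}(\vecz_b),
\]
where each $\ell_{ik}$ is a linear form in the variables of $\vecy_k$ and each $m_{jk}$ is a linear form in the variables of $\vecz_k$. Distributing the product gives
\[
(f \cdot g)(\vecy, \vecz) \spaced{=} \sum_{i=1}^{r_f} \sum_{j=1}^{r_g} \ell_{i1}(\vecy_1) \cdots \ell_{ia}(\vecy_a) \cdot m_{j1}(\vecz_1) \cdots m_{jb}(\vecz_b).
\]

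Next I would observe that because $\vecy \cap \vecz = \emptyset$, each summand indexed by $(i,j)$ is a product of linear forms, with exactly one linear form in each of the parts $\vecy_1,\ldots,\vecy_a,\vecz_1,\ldots,\vecz_b$ of the combined partition. This is exactly a rank-$1$ term in the set-multilinear $\Sigma\Pi\Sigma$ representation over the partition of $\vecy \cup \vecz$, so the displayed identity exhibits $f \cdot g$ as a set-multilinear sum of $r_f \cdot r_g$ rank-$1$ tensors. By definition of tensor rank, this yields $\TensorRank(f \cdot g) \leq r_f \cdot r_g = \TensorRank(f) \cdot \TensorRank(g)$.

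There is no real obstacle here; the only thing to be careful about is verifying that the product of a linear form in $\vecy_k$ and a linear form in $\vecz_{k'}$ does not collapse the set-multilinear structure, but this is immediate since the variable sets are disjoint and each linear form is kept in its own part of the refined partition.
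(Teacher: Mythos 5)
Your proof is correct and is exactly the ``immediate from the definitions'' argument the paper has in mind (the paper states this lemma without writing out a proof): distribute optimal decompositions of $f$ and $g$ and observe that disjointness of $\vecy$ and $\vecz$ makes each of the $r_f\cdot r_g$ cross terms a valid set-multilinear rank-$1$ term for the combined partition. Nothing is missing.
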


The following is a trivial upper bound for the tensor rank of any degree $d$ set-multilinear polynomial $f$. 

\begin{lemma}\label{lem:tensor-rank-trivial-upperbound}
Let $f$ be a set-multilinear polynomial with respect to $\vecx = \vecx_1 \sqcup \cdots \sqcup \vecx_d$ and say $n_i = \abs{\vecx_i}$. Then,
\[
\TensorRank(f) \spaced{\leq }  \frac{\prod_{i=1}^d n_i}{\max_i{n_i}}.
\]
In particular, if all $n_i=n$, then $\TensorRank(f) \leq n^{d-1}$. 
\end{lemma}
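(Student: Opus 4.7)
The plan is a direct expansion argument using sub-additivity (Lemma \ref{lem:tensor-subadditivity}) together with the fact that any set-multilinear product of linear forms has tensor rank exactly one. Without loss of generality, I would first relabel the parts so that $n_1 = \max_i n_i$; this is where the denominator $\max_i n_i$ in the bound comes from.

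Next, I would write $f$ in its set-multilinear monomial expansion and group the terms by their choice of variable from each part $\vecx_2, \ldots, \vecx_d$. Since $f$ is set-multilinear, every monomial contains exactly one variable from each $\vecx_i$, so this grouping gives
\[
f(\vecx) \spaced{=} \sum_{(j_2, \ldots, j_d) \in [n_2] \times \cdots \times [n_d]} \ell_{j_2, \ldots, j_d}(\vecx_1) \cdot x_{2, j_2} \cdots x_{d, j_d},
\]
where each $\ell_{j_2, \ldots, j_d}(\vecx_1)$ is a linear form in $\vecx_1$ (obtained by collecting the $\vecx_1$-coefficient of the monomial $x_{2, j_2} \cdots x_{d, j_d}$ in $f$). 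Each summand is a set-multilinear product of $d$ linear forms, hence is a rank-$1$ tensor by definition.

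Finally, I would invoke Lemma \ref{lem:tensor-subadditivity} to conclude that
\[
\TensorRank(f) \spaced{\leq} \sum_{(j_2, \ldots, j_d)} 1 \spaced{=} \prod_{i=2}^d n_i \spaced{=} \frac{\prod_{i=1}^d n_i}{n_1} \spaced{=} \frac{\prod_{i=1}^d n_i}{\max_i n_i}.
\]
There is no real obstacle here: the argument is entirely mechanical, and the only ``choice'' is to expand along the $d-1$ smallest parts so that the rank-$1$ linear form absorbs the largest dimension $n_1$. In the case $n_1 = \cdots = n_d = n$, this immediately gives the claimed $n^{d-1}$ upper bound.
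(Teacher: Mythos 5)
Your proof is correct and is precisely the standard argument the paper leaves implicit (the lemma is stated there without proof as a "trivial upper bound"): expand $f$ along the $d-1$ smallest parts, absorb the largest part into the linear form of each rank-one summand, and apply sub-additivity. Nothing to add.
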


A counting argument would imply that there do exist tensors of rank at least $n^{d-1}/d$ as each elementary tensor has $nd$ \emph{degrees of freedom} and an arbitrary tensor has $n^d$ \emph{degrees of freedom}.
\footnote{
One might think that the above upper bound of $n^{d-1}$ should be tight.
Bizarrely, it is not!
For example (cf. \cite{p85}), the maximum rank of any tensor of shape $2\times 2 \times 2$ is $3$ and not $4$ as one might expect!
Tensor rank also behaves in some strange ways under \emph{limits} unlike the usual matrix rank.
}

So, it is a natural question to understand if we can construct explicit tensors of high rank?  Raz \cite{raz10} showed that in
certain regimes of parameters involved, an answer to the above question would yield arithmetic formula lower bounds. We elaborate on this now. 

\subsection{Tensor rank of small formulas}

Henceforth, the variables in  $\vecx$ are partitioned as $\vecx = \vecx_1 \sqcup \cdots \sqcup \vecx_d$ with $\abs{\vecx_i} = n$ for all $i\in [d]$.
The main motivating question of Raz \cite{raz10} was the following:
\begin{quote}
If $f$ is a set-multilinear polynomial that is computed by a small formula, what can one say about its tensor rank?
\end{quote}

\noindent
Raz gave a partial\footnote{Partial in the sense that we do not know if the bound is tight.} answer to this question by showing the following result.

\begin{theorem}\label{thm:tensor-rk-of-homogeneous-formulas}
Let $\Phi$ be a formula of size $s \leq n^c$ computing a set-multilinear polynomial $f(\vecx)$ with respect to $\vecx = \vecx_1\sqcup \cdots \sqcup \vecx_d$.
If $d = O(\log n/\log\log n)$, then,
\[
\TensorRank(f) \spaced{\leq} \frac{n^d}{n^{d/\exp(c)}}. 
\] 
\end{theorem}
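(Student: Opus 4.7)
The plan is to depth-reduce the formula using \autoref{thm:depth-red-hom-formulas}, extract the set-multilinear projection of each resulting summand, and then apply sub-multiplicativity of tensor rank. The large fan-in at the top product layer provided by the depth reduction for homogeneous formulas is precisely what yields a saving over the trivial upper bound of \autoref{lem:tensor-rank-trivial-upperbound}.

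First, since $d = O(\log n / \log\log n)$, the standard homogenization of a formula blows the size up by at most an $n^{o(1)}$ factor, so I may assume $\Phi$ is a homogeneous formula of size at most $n^{c'}$ for some $c' = c + o(1)$. Applying \autoref{thm:depth-red-hom-formulas} with a parameter $t$ to be chosen later expresses $f$ as a homogeneous $\mySPSP{a}{t}$ circuit with top fan-in $s^{O(d/t)}$ and with $a = \Omega((d/t)\log t)$ non-trivial factors per top product gate.

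Next, for each summand $\prod_{j=1}^{r} f_{ij}$ I pass to its set-multilinear projection with respect to $\vecx = \vecx_1 \sqcup \cdots \sqcup \vecx_d$. This projection is a sum, over ordered partitions $(I_1, \ldots, I_r)$ of $[d]$ into blocks of sizes $\deg(f_{ij})$, of products $\prod_j \mathrm{SM}_{I_j}(f_{ij})$ where $\mathrm{SM}_{I_j}(f_{ij})$ is the set-multilinear part of $f_{ij}$ on the parts indexed by $I_j$. The number of such partitions is at most $d^d$, and since $d = O(\log n/\log\log n)$, this is $n^{O(1)}$. Each factor $\mathrm{SM}_{I_j}(f_{ij})$ is set-multilinear of degree $|I_j|$ on $|I_j|$ disjoint parts, so \autoref{lem:tensor-rank-trivial-upperbound} bounds its tensor rank by $n^{|I_j|-1}$; sub-multiplicativity (\autoref{lem:tensor-submultiplicativity}) then bounds each inner product by $\prod_j n^{|I_j|-1} = n^{d-r} \leq n^{d-a}$. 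Combining with sub-additivity (\autoref{lem:tensor-subadditivity}) yields
\[
\TensorRank(f) \;\leq\; s^{O(d/t)} \cdot d^d \cdot n^{d-a} \;\leq\; n^{d - \Omega((d/t)(\log t - O(c)))}.
\]

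Finally, I would choose $t = 2^{\gamma c}$ for a sufficiently large constant $\gamma$, so that $\log t - O(c) = \Omega(c)$ and the saving in the exponent becomes $\Omega(cd / 2^{\gamma c}) = d/\exp(c)$, yielding the claimed bound. The delicate step is this last parameter balance: the extra $\log t$ factor in the fan-in of the top product gates provided by \autoref{thm:depth-red-hom-formulas} is essential, since the weaker bound of \autoref{thm:av} alone would give no net saving over the trivial tensor rank upper bound.
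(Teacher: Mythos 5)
Your proof is correct, but it takes a genuinely different route from the paper's proof of this particular theorem. The paper obtains \autoref{thm:tensor-rk-of-homogeneous-formulas} by first invoking Raz's full set-multilinearization theorem (\autoref{thm:form-to-smlform}, which the paper cites without reproving and which is where the hypothesis $d = O(\log n/\log\log n)$ is consumed), and then proving \autoref{thm:raz-tensorrank-sml} by depth-reducing the resulting \emph{set-multilinear} formula; since set-multilinearity is preserved, each summand is already a set-multilinear product and sub-multiplicativity applies directly, with no counting over partitions. You instead replace set-multilinearization by the much cheaper homogenization of formulas (whose blowup $\binom{d+O(\log s)}{d}$ is indeed $n^{o(1)}$ in this range of $d$, so your $c' = c + o(1)$ checks out, though you should note that low-cost \emph{formula} homogenization is itself a small lemma requiring $d$ small, not a triviality), and then pay for the lack of set-multilinearity by projecting each summand onto its set-multilinear part, incurring the multinomial factor $\binom{d}{d_1\,\cdots\,d_a} \leq d^d = n^{O(1)}$. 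This is precisely the argument the paper uses for its \emph{improvement} in the last subsection (there stated only for formulas that are already homogeneous, and valid for all $\log d = o(\log n)$); prepending the homogenization step turns it into a proof of \autoref{thm:tensor-rk-of-homogeneous-formulas} that bypasses \autoref{thm:form-to-smlform} entirely. What your route buys is self-containedness --- you avoid the one black-boxed ingredient in the paper's proof --- at the cost of the $d^d$ counting term, which is harmless here since $d\log d = O(\log n)$. All the remaining steps (the partition decomposition of $\mathrm{SML}(T)$, the bound $n^{d-a}$ per summand, and the choice $t = 2^{\Theta(c)}$ so that $\frac{\log t}{10} - O(c) = \Omega(c)$) match the paper's calculations.
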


To prove \autoref{thm:tensor-rk-of-homogeneous-formulas}, Raz~\cite{raz10} first showed that when $d$ is small compared to $n$ (specifically, $d = O(\log n /\log\log n)$), any small formula can be converted to a \emph{set-multilinear} formula with only a polynomial over-head. Formally, he shows the following theorem, which is interesting and surprising in its own right\footnote{Indeed, it was believed  that even transforming a formula into a homogeneous formula would cause a superpolynomial blow up in its size if the degree of the polynomial computed by the formula is growing with $n$.}.

\begin{definition}[Set-multilinear formulas] 
  A formula $\Phi$ is said to be a \emph{set-multilinear formula} if every gate in the formula computes a set-multilinear polynomial syntactically.

That is, if $f$ and $g$ are polynomials computed by children of a $+$ gate, then both $f$ and $g$ are set-multilinear polynomials of the same degree over $\vecx$, with possibly different partitions. 
And if $f$ and $g$ are polynomials computed by children of a $\times$ gate, then both $f$ and $g$ are set-multilinear polyomials on disjoint sets of variables. 
\end{definition}

\begin{theorem}[\cite{raz10}]\label{thm:form-to-smlform}
  Suppose $d = O\pfrac{\log n}{\log\log n}$.
If $\Phi$ is a formula of size $s = \poly(n)$ that computes a set-multilinear polynomial $f(\vecx_1,\cdots, \vecx_d)$, then there is a \emph{set-multilinear} formula of $\poly(s)$ size that computes $f$ as well.
\end{theorem}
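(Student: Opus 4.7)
The strategy is to first depth-reduce the formula to depth $O(\log d)$ and then perform a syntactic \emph{set-multilinear projection} at every gate. The key quantitative observation is that $d \leq O(\log n / \log\log n)$ is precisely the regime where $d^d = \poly(n)$ (equivalently, $d\log d = O(\log n)$), which will cap the blow-up coming from the projection step.

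For the depth reduction, view $\Phi$ as an arithmetic circuit of size $s$ computing a degree-$d$ polynomial and apply \autoref{thm:vsbr} to obtain an equivalent circuit $C$ of depth $O(\log d)$ and size $\poly(s,d)$. Unfolding $C$ into a formula $\Phi'$ multiplies its size by at most $2^{O(\log d)} = \poly(d)$, so $\Phi'$ is a formula of depth $O(\log d)$ and size $\poly(s,d)$ still computing $f$.

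Next, for each gate $v$ of $\Phi'$ (computing a polynomial $g_v$) and each subset $S \subseteq [d]$, introduce a new gate $v^S$ meant to compute the multi-homogeneous part of $g_v$ of degree $1$ in $\vecx_i$ for $i \in S$ and degree $0$ in $\vecx_j$ for $j \notin S$. These projections are inductively defined in the obvious way: a leaf labeled by $x \in \vecx_i$ gives $v^S = x$ if $S = \{i\}$ and $0$ otherwise; at a $+$ gate $v = u+w$, $v^S = u^S + w^S$; and at a $\times$ gate $v = u \cdot w$, $v^S = \sum_{A \sqcup B = S} u^A \cdot w^B$. Since $f$ itself is set-multilinear of type $[d]$, the output $\mathrm{root}^{[d]}$ computes $f$. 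By construction each new gate is syntactically set-multilinear over $\{\vecx_i : i \in S\}$, and at $\times$ gates the two factors use disjoint blocks $A$ and $B$, so the resulting formula is set-multilinear. In the size analysis, each gate is replaced by at most $2^d$ typed copies; at a $\times$ gate each typed sum has at most $2^d$ product sub-terms and each typed child can be referenced by up to $2^d$ typed parents, so unfolding back into a formula multiplies the size by $O(2^d)$ per product gate. With product-depth $O(\log d)$ in $\Phi'$, the total blow-up from the projection step is $2^{O(d \log d)} = \poly(n)$, yielding a final set-multilinear formula of size $\poly(s,d) \cdot \poly(n) = \poly(n)$.

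The delicate point, and the main obstacle, is that both ingredients are essential: the projection-step blow-up is $2^{\Theta(d \Delta)}$ where $\Delta$ is the product-depth of the intermediate formula, so one needs $d \cdot \Delta = O(\log n)$ for it to be polynomial. Brent's generic $O(\log s)$-depth reduction for formulas alone would only give $\Delta = O(\log n)$ and thereby force $d = O(1)$; one really does need the stronger $O(\log d)$-depth guarantee of \autoref{thm:vsbr} together with the threshold $d = O(\log n / \log\log n)$ for the argument to succeed.
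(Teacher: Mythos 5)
The paper itself defers the proof of \autoref{thm:form-to-smlform} to \cite{raz10} and \cite{github}, so your proposal is measured against Raz's argument. Your set-multilinear projection is exactly the right second half: the typed gates $v^S$ with $v^S = \sum_{A \sqcup B = S} u^A \cdot w^B$ at product gates are the standard construction. The gap is in the first half, namely the claim that unfolding the depth-$O(\log d)$ circuit $C$ of \autoref{thm:vsbr} into a formula multiplies the size by only $2^{O(\log d)} = \poly(d)$. The circuit $C$ is a DAG whose sum gates have fan-in as large as $\poly(s)$ (they must: a fan-in-two formula of depth $O(\log d)$ has size at most $2^{O(\log d)} = \poly(d)$ and cannot compute $f$). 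The number of copies of a gate created by unfolding equals the number of root-to-gate paths in the DAG, which is bounded only by $(\max \text{fan-in})^{O(\log d)} = s^{O(\log d)}$; for $s = \poly(n)$ and $d = \Theta(\log n/\log\log n)$ this is $n^{\Theta(\log\log n)}$, already super-polynomial before the projection step begins. Reordering the steps (projecting on the circuit first, unfolding later) runs into the same count. What your argument actually needs is a polynomial-size \emph{formula} of product-depth $O(\log d)$, and neither \autoref{thm:vsbr} nor any other step of your proposal provides one.

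The second, related problem is that your estimate $2^{\Theta(d\Delta)}$ for the projection blow-up is too crude, and it is precisely this overestimate that pushes you toward the unattainable depth $O(\log d)$. The correct count is that a gate at product-depth $k$ below the root receives at most $(k+1)^d$ distinct typed copies in the unfolded set-multilinear formula, since each of the $d$ blocks is either peeled off at one of the $k$ product gates on the root-to-gate path or survives to the gate itself; hence the blow-up is $(\Delta+1)^d = 2^{O(d \log \Delta)}$ rather than $2^{O(d\Delta)}$. With this bound, Brent's depth reduction for formulas --- which keeps the object a \emph{formula} of size $\poly(s)$ while achieving depth $\Delta = O(\log s)$ --- suffices: the blow-up is $(O(\log s))^d = 2^{O(d \log\log n)}$, which is $\poly(n)$ exactly when $d = O(\log n/\log\log n)$. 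This is Raz's proof, and it is the approach your final paragraph explicitly dismisses on the basis of the too-crude blow-up bound.
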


He then proceeds to show that set-mutlilinear formulas of polynomial size can only compute polynomials with tensor rank non-trivially far from the upper bound of $n^{d-1}$. More formally, he shows the following theorem.

\begin{theorem}[\cite{raz10}] \label{thm:raz-tensorrank-sml}
 Let $\Phi$ be a set-multilinear formula of size $s \leq n^c$ computing a polynomial $f(\vecx_1,\cdots, \vecx_d)$.
Then,
\[
\TensorRank(f) \spaced{\leq} \frac{n^d}{n^{d/\exp(c)}}. 
\] 
\end{theorem}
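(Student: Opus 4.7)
The plan is to combine the structured depth reduction for homogeneous formulas (\autoref{thm:depth-red-hom-formulas}) with the basic tensor-rank inequalities of \autoref{lem:tensor-subadditivity}, \autoref{lem:tensor-submultiplicativity}, and \autoref{lem:tensor-rank-trivial-upperbound}. Since every set-multilinear formula is homogeneous, I will apply \autoref{thm:depth-red-hom-formulas} to $\Phi$ with a parameter $t$ to be chosen later, obtaining a representation
\[
f \spaced{=} \sum_{i=1}^{N} \prod_{j=1}^{r_i} f_{ij}
\]
with top fan-in $N \leq s^{10 d/t}$, each bottom factor satisfying $\deg(f_{ij}) \leq t$, and each outer product having fan-in $r_i \geq a := \tfrac{1}{10}(d/t)\log t$.

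The one structural observation I would have to verify is that this depth reduction preserves set-multilinearity. Inspecting the proofs of \autoref{thm:depth-red-hom-formulas} and the Hrube\v{s}--Yehudayoff expansion \autoref{thm:HY} on which it is built, each $f_{ij}$ is computed by some subformula of $\Phi$; and subformulas of a set-multilinear formula are automatically set-multilinear over a subset of the partition $\vecx = \vecx_1 \sqcup \cdots \sqcup \vecx_d$. Thus $f_{ij}$ is set-multilinear on $k_{ij} := \deg(f_{ij}) \leq t$ parts, with $\sum_j k_{ij} = d$ by homogeneity. The trivial bound \autoref{lem:tensor-rank-trivial-upperbound} then gives $\TensorRank(f_{ij}) \leq n^{k_{ij}-1}$, and sub-multiplicativity yields
\[
\TensorRank\left(\prod_j f_{ij}\right) \spaced{\leq} \prod_j n^{k_{ij}-1} \spaced{=} n^{d - r_i} \spaced{\leq} n^{d - a}.
\]
Sub-additivity across the $N$ summands then gives $\TensorRank(f) \leq s^{10 d/t} \cdot n^{d-a} \leq n^{\,d \,+\, 10 c d/t \,-\, a}$.

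All that will remain is the parameter choice. We want $a - 10 c d/t \geq d/\exp(c)$, which after substituting $a \geq (d \log t)/(10 t)$ reduces to $\exp(c) \cdot \left(\tfrac{1}{10}\log t - 10 c\right) \geq t$. Taking $\log t = \Theta(c)$ (say $\log t = 200 c$) makes the left side of order $c \cdot \exp(c)$ and the right side $2^{\Theta(c)}$, so the inequality holds provided $\exp(c)$ is a sufficiently large exponential-in-$c$ function; the theorem then follows by absorbing the constants into the definition of $\exp(c)$. I expect the main (mild) obstacle to be the bookkeeping in this parameter choice: we need $t \leq d$ and hence implicitly $d \geq 2^{\Omega(c)}$, but whenever $d$ is smaller the claimed bound is already weaker than the trivial $n^{d-1}$ and so holds for free.
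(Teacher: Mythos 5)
Your proposal is correct and follows essentially the same route as the paper's own proof: apply \autoref{thm:depth-red-hom-formulas} with bottom degree $t$, observe that the reduction preserves set-multilinearity, bound each summand by $n^{d-a}$ via \autoref{lem:tensor-rank-trivial-upperbound} and \autoref{lem:tensor-submultiplicativity}, sum via \autoref{lem:tensor-subadditivity}, and choose $\log t = \Theta(c)$. Your extra remarks (why set-multilinearity is preserved, and why the case $t > d$ is vacuous) are correct refinements of details the paper glosses over.
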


It is immediately clear that \autoref{thm:raz-tensorrank-sml} and \autoref{thm:form-to-smlform} imply \autoref{thm:tensor-rk-of-homogeneous-formulas}. In this section, we give a simple proof of \autoref{thm:raz-tensorrank-sml} using \autoref{thm:depth-red-hom-formulas}. We refer the reader to Raz's paper~\cite{raz10} or \cite{github} for a full proof of \autoref{thm:form-to-smlform}.

\begin{proof}[Proof of \autoref{thm:raz-tensorrank-sml}]
We shall start with the set-multilinear formula $\Phi$ of size $n^c$ and reduce it to depth-$4$ via \autoref{thm:depth-red-hom-formulas} for a bottom degree parameter $t$ that shall be chosen shortly. 
It is fairly straightforward to observe that the depth reduction preserves multilinearity and set-multilinearity as well. 
Therefore we now have a set-multilinear expression of the form
\[
f \spaced{=} T_1 + \cdots + T_{s'}
\]
where $s' \leq s^{10(d/t)} = n^{10c(d/t)}$ and each $T_i = Q_{i1} \cdots Q_{ia_i}$ is a set-multilinear product. Let us fix one such term $T = Q_{1} \cdots Q_a$ and we know that this is a set-multilinear product with $a \geq \frac{d\log t}{10 t}$ non-trivial factors (by \autoref{thm:depth-red-hom-formulas}). Let $d_i = \deg(Q_i)$. By the sub-multiplicativity of tensor rank (\autoref{lem:tensor-submultiplicativity}) and the trivial upper bound (\autoref{lem:tensor-rank-trivial-upperbound}) we have
\begin{align*}
\TensorRank(T) & \leq n^{d_1 - 1} \cdots n^{d_a - 1}\\
 & =  n^{d - a}\\
\implies \TensorRank(f) & \leq s' \cdot n^{d-a} & \text{(\autoref{lem:tensor-subadditivity})}\\
 & =  \frac{n^d}{n^{a - 10c(d/t)}}
\end{align*}
Let us focus on the exponent of $n$ in the denominator. Using the lower bound on $a$ from \autoref{thm:depth-red-hom-formulas}, we get
\[
a - 10c(d/t) \spaced{\geq} \frac{d\log t}{10t} - 10c\frac{d}{t}  \spaced{=} \frac{d}{t} \inparen{\frac{\log t}{10} - 10c}
\]
If we set $\frac{\log t}{10} = 11c$, then we get $a - 10c(d/t) \;\geq\; cd/t \;=\; d/\exp(c)$. Hence,
\[
\TensorRank(f) \spaced{\leq} \frac{n^d}{n^{d/\exp(c)}}\qedhere
\]
\end{proof}

We would like to remark that, in spirit, a tensor rank upper bound for formulas is essentially a form of non-trivial reduction to set-multilinear depth three circuits. In this sense, this connection between tensor rank upper bound and reduction to depth four is perhaps not too un-natural. 

Also, observe that if instead of  a general set-multilinear formula, we had started with a constant depth set-multilinear formula, we would have obtained a slightly better upper bound (better dependence on $c$) on the tensor rank of $f$. The improvement essentially comes from the fact that the depth reduction for   formulas  with product depth $\Delta$ to $\SPSP^{[t]}$ guarantees that the fan-in of product gates at the second level is at least $\Theta\left( \frac{d\cdot d^{1/\Delta}}{t} \right)$ (\autoref{sec:constant-depth-circuits}). We skip the details for the reader to verify. 

\subsection{An improvement}

The result of Raz \cite{raz10} required $d = O(\log n / \log \log n)$ to be able to \emph{set-multilinearize} the formula without much cost. However, with this alternate proof via the improved depth reduction, we can delay the set-multilinearization until a later stage and thus get the same upper bound on the tensor rank for much larger $d$, provided that the formula we started with was homogeneous. 

\begin{theorem} Let $f$ be a set-multilinear polynomial with respect  to $\vecx = \vecx_1 \sqcup \cdots \sqcup \vecx_d$ that is computed by a homogeneous formula (not necessarily set-multilinear) $\Phi$ of size $s = n^c$.
If $d$ is \emph{sub-polynomial} in $n$, that is $\log d = o(\log n)$, then
\[
\TensorRank(f) \spaced{\leq} \frac{n^d}{n^{d / \exp(c)}}.
\]
\end{theorem}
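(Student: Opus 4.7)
The plan is to run the same argument as \autoref{thm:raz-tensorrank-sml}, but to defer set-multilinearization until \emph{after} the depth reduction has already been applied. First, apply \autoref{thm:depth-red-hom-formulas} to the given homogeneous formula $\Phi$, with a bottom-degree parameter $t$ to be chosen later. This yields a homogeneous expression
\[
f \spaced{=} \sum_{i=1}^{s'} Q_{i1} \cdots Q_{i a_i}
\]
with $s' \leq s^{10 d/t} = n^{10 c d/t}$, every factor $Q_{ij}$ of positive degree at most $t$, and $a_i \geq (d \log t)/(10 t)$. Unlike in \autoref{thm:raz-tensorrank-sml}, we cannot assume that each product $T_i := Q_{i1} \cdots Q_{i a_i}$ is set-multilinear; only the whole sum $f$ is.

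The next step is to take the set-multilinear projection. Writing $[\,\cdot\,]_{\mathrm{sm}}$ for this projection (w.r.t.\ $\vecx_1 \sqcup \cdots \sqcup \vecx_d$) and $[Q]_S$ for the restriction of $Q$ that keeps only monomials using exactly one variable from each $\vecx_k$, $k \in S$, linearity gives $f = \sum_i [T_i]_{\mathrm{sm}}$, and a direct monomial-counting argument yields
\[
[T_i]_{\mathrm{sm}} \spaced{=} \sum_{\phi\,:\,[d]\to [a_i]} \prod_{j=1}^{a_i} [Q_{ij}]_{\phi^{-1}(j)}.
\]
Homogeneity of the $Q_{ij}$ kills all but the $\binom{d}{\deg(Q_{i1}),\dots,\deg(Q_{i a_i})} \leq a_i^d \leq d^d$ assignments $\phi$ for which $|\phi^{-1}(j)| = \deg(Q_{ij})$ for every $j$. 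Each surviving summand is a set-multilinear product on disjoint variables, and so by sub-multiplicativity (\autoref{lem:tensor-submultiplicativity}) combined with the trivial bound (\autoref{lem:tensor-rank-trivial-upperbound}) it has tensor rank at most $\prod_j n^{\deg(Q_{ij})-1} = n^{d-a_i}$.

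Putting everything together using sub-additivity (\autoref{lem:tensor-subadditivity}), and writing $a := (d\log t)/(10t) \leq \min_i a_i$,
\[
\TensorRank(f) \spaced{\leq} s' \cdot d^d \cdot n^{d-a} \spaced{=} n^{10cd/t} \cdot n^{d\log d/\log n} \cdot n^{d-a}.
\]
The only new feature compared to \autoref{thm:raz-tensorrank-sml} is the extra multinomial factor $d^d = n^{d\log d/\log n}$, which is the price of not having a set-multilinear intermediate circuit; this is the step I expect to be the main obstacle. The hypothesis $\log d = o(\log n)$ is exactly what makes this factor a sub-polynomial perturbation $n^{o(d)}$. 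Choosing $\log t = 110 c$ exactly as in the proof of \autoref{thm:raz-tensorrank-sml} gives $a - 10cd/t \geq cd/t$; for all sufficiently large $n$ this gap dominates the $o(d)$ correction, so the final exponent in the denominator is $\Theta(d/\exp(c))$, yielding $\TensorRank(f) \leq n^d / n^{d/\exp(c)}$ as claimed.
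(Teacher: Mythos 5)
Your proposal is correct and follows essentially the same route as the paper's own proof: depth-reduce the homogeneous formula first, then take the set-multilinear projection of each product term, expand it over partitions of $[d]$ matching the factor degrees, and absorb the multinomial factor $d^d = n^{d\log d/\log n}$ using $\log d = o(\log n)$. The choice of $t = 2^{O(c)}$ and the final accounting of the exponent also match the paper.
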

\begin{proof}
  As earlier, we shall start with the formula $\Phi$ of size $n^c$ and reduce it to a $\SPSP^{[t]}$ formula $\Phi'$ of size $n^{10c(d/t)}$ for a $t$ that shall be chosen shortly.
Again, $\Phi'$ is a sum of terms of the form $T = Q_1 \cdots Q_a$, a product of $a \geq \frac{d \log t}{10t}$ non-trivial factors.
The difference here is that this is not necessarily a set-multilinear product.
Let $d_i = \deg(Q_i)$.
Among the monomials in $Q_i$, there may be some that are divisible by two or more variables from some part $\vecx_j$ and others that are products of variables from distinct parts. For any $S \subset [d]$ let, $Q_{i,S}$ be the sum of monomials of $Q_i$ that is a product of exactly only variable from each $\vecx_j$ for $j \in S$. Note that no monomials of $Q_i$ that is a product of two or more variables from some $\vecx_j$ can contribute to a set-multilinear monomial of $f$. Thus, if $\mathrm{SML}(T)$ is the restriction of $T$ to just the set-multilinear monomials of $T$, then
\[
\mathrm{SML}(T)\spaced{=} \sum_{\substack{S_1 \sqcup \cdots \sqcup S_a = [d]\\|S_i| = d_i}}\; Q_{1,S_1} \cdots Q_{a,S_a}
\]
Here, $S_1, S_2, \ldots, S_a$ form a partition of the set $[d]$.
We can observe that the tensor rank of each summand is upper bounded by $n^{d_1 - 1}n^{d_2 - 1}\cdots n^{d_a - 1}$ and the number of summands is at most $\binom{d}{d_1}\binom{d-d_1}{d_2}\cdots\binom{d-\sum_{i=1}^{a-1}d_i}{d_a}$.
Using \autoref{lem:tensor-subadditivity} and \autoref{lem:tensor-submultiplicativity}, we get the following. 
\begin{eqnarray*}
\TensorRank(\mathrm{SML}(T)) & \leq & \frac{n^d}{n^a}\cdot \binom{d}{d_1\ d_2\ \cdots \ d_a}\\
& \leq & n^{d-a} \cdot d^d\\
& = & n^{d-a} \cdot n^{d \log d / \log n}\\
\implies \quad \TensorRank(f) & \leq & n^d / n^{a - 10c(d/t) - d\log d/\log n}
\end{eqnarray*}
Again, let us focus on the exponent in the denominator
\begin{eqnarray*}
a - \frac{10c \cdot d}{t} - \frac{d \log d}{\log n} & \geq & \frac{d}{t} \inparen{\frac{\log t}{10}  - 10c - \frac{t \log d}{\log n}}
\end{eqnarray*}
Once again we shall set $t = 2^{O(c)}$ so that $\frac{\log t}{10} - 10c = c$ and since $\log d = o(\log n)$ it follows that 
\[
\frac{d}{t} \inparen{\frac{\log t}{10}  - 10c - \frac{t \log d}{\log n}} \geq \frac{d}{\exp(c)}
\]
Hence,
\[
\TensorRank(f) \spaced{\leq} \frac{n^d}{n^{d/\exp(c)}}\qedhere
\]
\end{proof}

\section*{Acknowledgements} Part of this work was done while the first two authors were visiting Tel Aviv University. We are grateful to Amir Shpilka for supporting the visit. We would also like to thank Jeroen Zuiddam for pointing out that tensor rank if not multiplicative in general and the preprint \cite{CJZ17} with us.


\bibliographystyle{customurlbst/alphaurlpp}
\bibliography{references}

\end{document}